\documentclass{article}

\usepackage{amsmath,amsthm,amssymb,fullpage,color}

\newcounter{constnum}
\renewcommand{\qed}{\nobreak \ifvmode \relax \else
      \ifdim\lastskip<1.5em \hskip-\lastskip
      \hskip1.5em plus0em minus0.5em \fi \nobreak
      \vrule height0.75em width0.5em depth0.25em\fi}

\newtheorem{theorem}{Theorem}
\newtheorem{lemma}{Lemma}
\newtheorem{definition}{Definition}
\newtheorem{corollary}{Corollary}

\begin{document}

\title{On the Non-existence of Perfect Sequences with the Array Orthogonality Property}
\author{Sam Blake\\ {\small\texttt{samuel [dot] thomas [dot] blake [at] gmail [dot] com} }}
\date{\today}
\maketitle

\begin{abstract}
For over three decades, the pursuit of perfect periodic autocorrelation sequences has been constrained by Mow's conjecture, which posits that no perfect sequence over an $n$-phase alphabet can exist with a length greater than $n^2$. While a proof across all conceivable sequence classes remains an open problem, this paper establishes bounds for a prominent class of constructions relying on the Array Orthogonality Property (AOP). We show that sequences generated by pure bivariate polynomial index functions cannot exceed the $n^2$ Frank-Heimiller bound due to algebraic periodicity. Furthermore, we extend this result to floored rational index functions, proving that attempts to geometrically expand the array dimensions inherently result in destructive fractional phase scattering. Neutralising this scattering strictly forces a collapse of the phase space, re-establishing the $n^2$ limit. Finally, we define the boundaries of these theorems, noting their fundamental reliance on commutative algebras, and contrast them with recent sequence constructions demonstrating the existence of unbounded perfect sequences over non-commutative unit quaternions.
\end{abstract}

%
%

\section{Introduction}

The discovery of perfect periodic autocorrelation sequences over complex roots of unity has been a central problem in signal processing, cryptography, and combinatorial design theory. Constructions by Frank-Heimiller \cite{frank1962, heimiller1961}, Milewski \cite{milewski1983}, and Blake \cite{blake2014} yielded perfect sequences by exploiting algebraic phase rotations. A sequence producing a longer orthogonal capacity for a given number of phases would be highly desirable for applications; however, it has long been conjectured by Mow \cite{mow1993} that such sequences do not exist for lengths exceeding $n^2$ for an $n$-phase alphabet. \\

Alternative approaches to proving non-existence have focused heavily on number-theoretic constraints rather than structural geometry. For instance, Ma and Ng \cite{ma2009} demonstrated that the existence of complex $p$-ary perfect sequences (where $p$ is an odd prime) is equivalent to the existence of certain relative difference sets. Using character theory, they proved the absolute non-existence of $p$-ary perfect sequences for several specific lengths---such as $p^s$ for $s \ge 3$, $2p^s$ for $s \ge 1$, and $pq$ for prime $q > p$---regardless of the construction method. \\

While a completely general proof of Mow's conjecture remains elusive, we can establish strict, geometrically driven upper bounds for a broad and natural class of structured sequence constructions. Specifically, many known perfect sequences rely on the Array Orthogonality Property (AOP) applied to an array whose entries are defined by a bivariate polynomial power of a primitive root of unity. This paper shows that sequences generated via the AOP using bivariate polynomial index functions are constrained by the $n^2$ bound. 

%
%

\section{Prerequisites}

Before presenting our main results on the length bounds of sequences that possess the AOP, we explicitly define the AOP and state several foundational conditions for the perfection of sequences and arrays that will serve as the basis of our work.

\begin{definition}[Array Orthogonality Property]
Let $\textbf{s} =$ be a sequence of length $L = RC$.
The array $\textbf{S}$ of size $R \times C$ associated with $\textbf{s}$ is formed by enumerating $\textbf{s}$ row-by-row, such that $S_{i,j} = s_{iC+j}$.
We denote the $j^{\text{th}}$ column of $\textbf{S}$ by $\textbf{S}[j]$. The sequence $\textbf{s}$ possesses the Array Orthogonality Property (AOP) for the divisor $C$ if $\textbf{S}$ satisfies two distinct geometric conditions:
\begin{enumerate}
    \item For all $\tau$ and $j_0 \not\equiv j_1 \mod C$: $\theta_{\textbf{S}[j_0], \textbf{S}[j_1]}(\tau) = 0$.
(That is, any two distinct columns of $\textbf{S}$ are mutually orthogonal.)
\item For all $\tau \not\equiv 0 \mod R$: $\sum_{j=0}^{C-1}\theta_{\textbf{S}[j]}(\tau) = 0$.
(That is, the columns of $\textbf{S}$ form a set of periodic complementary sequences.)
\end{enumerate}
\end{definition}

\begin{theorem}[Mow \cite{mow1993}]\label{thm:aop_perfect_seq}
Any sequence with the AOP is perfect.
\end{theorem}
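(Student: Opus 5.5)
The plan is to compute the periodic autocorrelation $\theta_{\textbf{s}}(\tau)$ directly and rewrite it in terms of the column cross-correlations of $\textbf{S}$. Write every shift as $\tau = uC + v$ with $0 \le v < C$. Index positions of $\textbf{s}$ as $t = iC + j$, with $0 \le i < R$ and $0 \le j < C$. Then $t + \tau = (i+u)C + (j+v)$.

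First split the inner sum over $j$ according to whether the column index wraps:
\begin{itemize}
\item if $j + v < C$, the shifted entry is $S_{i+u,\, j+v}$, lying in column $j+v$ with row shift $u$;
\item if $j + v \ge C$, it is $S_{i+u+1,\, j+v-C}$, lying in column $j+v-C$ with row shift $u+1$.
\end{itemize}
Row indices are read mod $R$, which is consistent because $L = RC$, so index arithmetic mod $L$ agrees with carrying into the row index mod $R$. Summing over $i$ for fixed $j$ then gives
\[
\theta_{\textbf{s}}(\tau) = \sum_{j=0}^{C-v-1} \theta_{\textbf{S}[j],\textbf{S}[j+v]}(u) + \sum_{j=C-v}^{C-1} \theta_{\textbf{S}[j],\textbf{S}[j+v-C]}(u+1).
\]
Here $\theta_{\textbf{a},\textbf{b}}(k)=\sum_i a_i\overline{b_{i+k}}$ is the periodic cross-correlation of length-$R$ columns. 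I will fix this convention at the start, though the conclusion is insensitive to the choice.

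Next split into cases on $v$.
\begin{itemize}
\item \emph{Case $v \neq 0$.} Every term pairs column $j$ with a different column, since $j+v$ and $j+v-C$ are both $\not\equiv j \pmod C$. Condition~1 of the AOP makes each cross-correlation vanish at every shift, so $\theta_{\textbf{s}}(\tau)=0$.
\item \emph{Case $v = 0$.} The second sum is empty and the first reduces to $\sum_{j=0}^{C-1}\theta_{\textbf{S}[j]}(u)$. For $\tau \not\equiv 0 \pmod L$ with $v=0$ we have $u \not\equiv 0 \pmod R$, so condition~2 gives $0$.
\end{itemize}
Hence $\theta_{\textbf{s}}(\tau)=0$ for all $\tau\not\equiv 0 \pmod L$, which is exactly perfection.

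I expect the only delicate step to be the index bookkeeping in the wrap-around term. This means checking three things: that the carry into the row index is exactly $u+1$; that the row index taken mod $R$ is legitimate (it is, since $L = RC$); and that the cross-correlation conventions match the definition's $\theta_{\textbf{S}[j_0],\textbf{S}[j_1]}$ up to conjugation or sign of the shift. The last point is harmless because condition~1 is assumed for \emph{all} $\tau$ and all ordered pairs of distinct columns.
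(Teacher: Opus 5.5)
Your proof is correct and follows essentially the same route as the paper. Your split of the $j$-sum into a non-wrapping part (row shift $u$) and a wrapping part (row shift $u+1$) is the paper's single sum $\sum_{r}\theta_{\textbf{S}[r],\textbf{S}[r+r' \bmod C]}\bigl(q'+\lfloor (r+r')/C\rfloor\bigr)$ written out explicitly, and you then use the same case analysis ($v\neq 0$ via condition~1, $v=0$ via condition~2).
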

\begin{proof}
The periodic autocorrelation of a sequence $\textbf{s}$ of length $RC$ for shift $\tau$ is given by 
$$\theta_{\textbf{s}}(\tau) = \sum_{i=0}^{R C - 1} s_i \, s_{i+\tau}^*.$$
Change coordinates by letting $i = q C + r$ (where $0 \le r < C$), and $\tau = q' C + r'$ (where $0 \le r'<C$).
Then we have 
\begin{align*}
\theta_{\textbf{s}}(q'C+r') &= \sum_{r=0}^{C - 1} \sum_{q=0}^{R - 1} s_{q C+r} \, s_{(q+q') C+r+r'}^*\\
&=  \sum_{r=0}^{C - 1} \sum_{q=0}^{R - 1} s_{q C+r} \, s_{\left(q+q'+\left\lfloor \frac{r+r'}{C}\right\rfloor\right) C + (r+r' \bmod C)}^*\\
&= \sum_{r=0}^{C - 1} \sum_{q=0}^{R - 1}  S_{q,r} \, S_{q+q'+\left\lfloor \frac{r+r'}{C}\right\rfloor, (r+r' \bmod C)}^*\\
&= \sum_{r=0}^{C - 1} \theta_{\textbf{S}[r],\textbf{S}[r+r' \bmod C]} \left(q' + \left\lfloor \frac{r+r'}{C}\right\rfloor \right).
\end{align*}
For $r'\neq 0$, the indices $r$ and $r+r' \bmod C$ represent distinct columns. Condition 1 of the AOP dictates that these distinct columns are orthogonal, which implies $\theta_{\textbf{s}}(\tau) = 0$. Otherwise, for $r'=0$, the shift simplifies to $q'C$, and the summation evaluates the sum of the autocorrelations of the individual columns at shift $q'$. Because we are evaluating the off-peak autocorrelation of the sequence, $\tau \not\equiv 0 \mod{RC}$, which implies $q' \not\equiv 0 \mod R$. Thus, condition 2 of the AOP implies $\theta_{\textbf{s}}(\tau) = 0$. Therefore, $\textbf{s}$ has perfect periodic autocorrelation.
\end{proof}

\begin{corollary}\label{thm:aop_array_perfect}
If a sequence $\textbf{s}$ has the AOP, then the array $\textbf{S}$ associated with $\textbf{s}$ has perfect periodic autocorrelation.
\end{corollary}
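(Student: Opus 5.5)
The plan is to compute the two-dimensional periodic autocorrelation of $\textbf{S}$ directly and show that it vanishes at every nonzero shift. No detour through the one-dimensional sequence is needed. For a shift $(u,v)$ with $0 \le u < R$, $0 \le v < C$, define
$$\theta_{\textbf{S}}(u,v) = \sum_{i=0}^{R-1}\sum_{j=0}^{C-1} S_{i,j}\, S^*_{i+u \bmod R,\; j+v \bmod C}.$$
Grouping the terms by column $j$, the inner sum over $i$ is a periodic cross-correlation of two columns:
$$\theta_{\textbf{S}}(u,v) = \sum_{j=0}^{C-1} \theta_{\textbf{S}[j],\textbf{S}[j+v \bmod C]}(u).$$
This is the same regrouping as in the proof of Theorem~\ref{thm:aop_perfect_seq}. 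It is simpler here because the array is taken cyclically in both coordinates, so no carry term $\lfloor (r+r')/C \rfloor$ appears.

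The argument then splits into two cases.

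\textbf{Case $v \neq 0$.} Each pair $j$ and $j+v \bmod C$ consists of distinct columns. By condition 1 of the AOP, every summand vanishes for every $u$, so $\theta_{\textbf{S}}(u,v)=0$.

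\textbf{Case $v = 0$.} Then $u \not\equiv 0 \pmod R$, since the shift is off-peak. The sum reduces to $\sum_j \theta_{\textbf{S}[j]}(u)$, which is zero by condition 2.

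These two cases cover every shift $(u,v)\neq(0,0)$, so $\textbf{S}$ has perfect periodic autocorrelation.

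I expect no step to be a real obstacle. The only care needed is in the interpretation: the relevant notion is the doubly periodic (toroidal) autocorrelation of the $R\times C$ array. The statement is then exactly the two AOP conditions read in array coordinates. One can also cite Theorem~\ref{thm:aop_perfect_seq}, but the direct computation avoids the carry bookkeeping and is cleaner.
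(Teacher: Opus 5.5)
Your proof is correct and takes essentially the paper's route: the paper also splits on a nonzero horizontal shift (killed by condition 1) versus zero horizontal shift with off-peak vertical shift (killed by condition 2), and your decomposition $\theta_{\textbf{S}}(u,v)=\sum_j \theta_{\textbf{S}[j],\textbf{S}[j+v \bmod C]}(u)$ spells out the step the paper leaves implicit.

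One caution about your closing aside: Theorem~\ref{thm:aop_perfect_seq} on its own only gives one-dimensional perfection of $\textbf{s}$, and that does not imply toroidal perfection of the row-by-row folded array. For example, the length-$6$ Chu sequence $s_k=e^{\pi\sqrt{-1}\,k^2/6}$ folded into a $2\times 3$ array gives $\theta_{\textbf{S}}(0,1)=(s_2-s_5)(s_0-s_3)^*\neq 0$. So using the AOP conditions directly is necessary, not merely cleaner.
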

\begin{proof}
By the first condition of the AOP, any two distinct columns of $\textbf{S}$ are orthogonal, which implies the 2-dimensional off-peak autocorrelation $\theta_{\textbf{S}}(h,v) = 0$ for any horizontal shift $h \neq 0$. Otherwise, if $h = 0$, we consider an off-peak vertical shift $v \not\equiv 0 \mod R$. By the second condition of the AOP, we have
$$\sum_{j=0}^{C-1}\theta_{\textbf{S}[j]}(v) = \sum_{j=0}^{C-1}\sum_{i=0}^{R-1} S_{i,j}\,S_{i+v,j}^* = \theta_{\textbf{S}}(0,v) = 0.$$
Thus, the 2-dimensional array $\textbf{S}$ has perfect periodic autocorrelation.
\end{proof}

While Theorem \ref{thm:aop_perfect_seq} and Corollary \ref{thm:aop_array_perfect} establish the foundational equivalence between AOP sequences and perfect arrays, we need a method to bound the dimensions of these arrays. Using a result from B\"{o}mer and Antweiler \cite{bomer1992}, we can project the two-dimensional array down to one dimension. By taking the sum across the rows or columns, we compress the structural properties of the array into a single sequence over $\mathbb{C}$. Somewhat surprisingly, this projected one-dimensional sequence inherits the ``perfection'' of its parent array. We will use this invariant property in Section 3 to constrain the row and column dimensions independently, as bounding the length of this projected sequence strictly bounds the corresponding dimension of the array.

\begin{theorem}[B\"{o}mer and Antweiler \cite{bomer1992}]\label{thm:exist_cond_row_col}
Let $\textbf{S}_{\text{rows}} = \sum_{i=0}^{R-1}S_{i,j}$ be the sum of the rows of the $R \times C$ array $\textbf{S}$, and $\textbf{S}_{\text{cols}} = \sum_{j=0}^{C-1}S_{i,j}$ be the sum of the columns of $\textbf{S}$.
If $\textbf{S}$ is a perfect array, then $\textbf{S}_{\text{rows}}$ and $\textbf{S}_{\text{cols}}$ are perfect sequences.
\end{theorem}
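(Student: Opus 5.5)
The plan is to show that periodic autocorrelation is preserved under summing along one axis. The projected autocorrelation turns out to be a sum of two-dimensional autocorrelations of $\textbf{S}$ over an entire coset of shifts. Perfection of $\textbf{S}$ then kills every term except possibly the peak.

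Concretely, set $\textbf{u} = \textbf{S}_{\text{rows}}$, so $u_j = \sum_{i=0}^{R-1} S_{i,j}$ for $0 \le j < C$, with indices taken mod $C$. For a shift $h \in \mathbb{Z}_C$, expand
\begin{align*}
\theta_{\textbf{u}}(h) &= \sum_{j=0}^{C-1} u_j\, u_{j+h}^* = \sum_{j=0}^{C-1}\sum_{i=0}^{R-1}\sum_{i'=0}^{R-1} S_{i,j}\, S_{i',j+h}^*.
\end{align*}
Next substitute $i' = i + v$ with $v$ ranging over $\mathbb{Z}_R$. This is a bijection for each fixed $i$ because the array is periodic in both coordinates. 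Interchanging the finite sums gives
$$\theta_{\textbf{u}}(h) = \sum_{v=0}^{R-1} \sum_{j=0}^{C-1}\sum_{i=0}^{R-1} S_{i,j}\, S_{i+v,j+h}^* = \sum_{v=0}^{R-1} \theta_{\textbf{S}}(v,h),$$
where $\theta_{\textbf{S}}(v,h)$ is the two-dimensional periodic autocorrelation with vertical shift $v$ and horizontal shift $h$. This is the convention implicit in the proof of Corollary \ref{thm:aop_array_perfect}.

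Now take $h \not\equiv 0 \pmod C$. Then every pair $(v,h)$ in the sum is an off-peak shift of $\textbf{S}$. Since $\textbf{S}$ is perfect, each term vanishes, so $\theta_{\textbf{u}}(h) = 0$ and $\textbf{S}_{\text{rows}}$ is perfect.

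The column projection $\textbf{S}_{\text{cols}}$, with $w_i = \sum_j S_{i,j}$, is handled by the symmetric computation. Substituting $j' = j + h$ yields $\theta_{\textbf{w}}(v) = \sum_{h=0}^{C-1}\theta_{\textbf{S}}(v,h)$, which vanishes for every $v \not\equiv 0 \pmod R$.

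No step is a genuine obstacle; the only care required is bookkeeping.
\begin{enumerate}
\item The reindexing must be a bijection of $\mathbb{Z}_R$ (respectively $\mathbb{Z}_C$). This holds because all indices are periodic.
\item The peak value $\theta_{\textbf{u}}(0) = \sum_v \theta_{\textbf{S}}(v,0) = \theta_{\textbf{S}}(0,0) = \sum_{i,j}|S_{i,j}|^2$ need not equal $C$ times a constant modulus. The projected entries live in $\mathbb{C}$ and are generally not unimodular, so "perfect" here means only that all off-peak autocorrelations vanish.
\end{enumerate}
I would state both points explicitly, so that later uses of the projections in Section 3 rely only on off-peak vanishing.
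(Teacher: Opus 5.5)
Your proof is correct and follows essentially the same route as the paper. Both expand the autocorrelation of the projected sequence and reindex the inner sum by the shift in the summed-out coordinate, which gives a sum of $\theta_{\textbf{S}}$ over a full coset of shifts, and then kill each term by perfection of $\textbf{S}$; the paper works out $\textbf{S}_{\text{cols}}$ explicitly and leaves $\textbf{S}_{\text{rows}}$ to symmetry, while you do the reverse. Your peak identity $\theta_{\textbf{u}}(0)=\theta_{\textbf{S}}(0,0)$ also matches the paper, and note that Theorem~\ref{thm:aop_bound} uses this identity too, not only the off-peak vanishing.
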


\begin{proof}
Let $\textbf{r} = \textbf{S}_{\text{cols}}$ be the sequence formed by the sum of the columns of $\textbf{S}$, so that $r_i = \sum_{j=0}^{C-1} S_{i,j}$.
The periodic autocorrelation of $\textbf{r}$ at shift $\tau$ is given by
\begin{align*}
    \theta_{\textbf{r}}(\tau) &= \sum_{i=0}^{R-1} r_i r_{i+\tau}^* \\
    &= \sum_{i=0}^{R-1} \left( \sum_{j=0}^{C-1} S_{i,j} \right) \left( \sum_{k=0}^{C-1} S_{i+\tau, k}^* \right).
\end{align*}
Let $k = (j+h) \bmod C$. We can rewrite the double summation over $j$ and $k$ as a summation over $j$ and the column shift $h$
\begin{align*}
    \theta_{\textbf{r}}(\tau) &= \sum_{h=0}^{C-1} \sum_{i=0}^{R-1} \sum_{j=0}^{C-1} S_{i,j} S_{i+\tau, (j+h)\bmod C}^* \\
    &= \sum_{h=0}^{C-1} \theta_{\textbf{S}}(\tau, h),
\end{align*}
where $\theta_{\textbf{S}}(\tau, h)$ is the 2-dimensional periodic autocorrelation of the array $\textbf{S}$ at shift $(\tau, h)$. Because $\textbf{S}$ is a perfect array, $\theta_{\textbf{S}}(\tau, h) = 0$ for all $(\tau, h) \not\equiv (0,0) \mod{(R,C)}$. For any off-peak vertical shift $\tau \not\equiv 0 \mod R$, every term in the sum over $h$ is strictly off-peak in the 2-dimensional sense, meaning $\theta_{\textbf{S}}(\tau, h) = 0$ for all $h$. Thus, $\theta_{\textbf{r}}(\tau) = 0$ for all $\tau \not\equiv 0 \mod R$, proving that $\textbf{r}$ is a perfect sequence.\\

Furthermore, for the peak shift $\tau = 0$, we have $\theta_{\textbf{r}}(0) = \sum_{h=0}^{C-1} \theta_{\textbf{S}}(0, h)$. Since the array is perfect, $\theta_{\textbf{S}}(0, h) = 0$ for all $h \neq 0$. Thus, $\theta_{\textbf{r}}(0) = \theta_{\textbf{S}}(0, 0)$. \\

The proof that $\textbf{S}_{\text{rows}}$ is a perfect sequence with peak autocorrelation equal to the array's peak energy follows symmetrically.
\end{proof}

%
%

\section{AOP Bounds for Bivariate Polynomial Index Functions}

The core contribution of this paper is a proof that no sequence generated by a bivariate polynomial index function can satisfy the AOP for lengths exceeding the Frank-Heimiller bound of $n^2$. The fundamental mechanism preventing the existence of these extended sequences is algebraic periodicity. Because the sequence's phase transitions are dictated by an integer-coefficient polynomial evaluated modulo $n$, the phase trajectory is locked into a repeating geometric cycle. \\

Before establishing the main theorem (Theorem \ref{thm:aop_bound}), we must formalise this behavior. While Lemma \ref{lem:poly_mod} is a straightforward consequence of the binomial theorem, it acts as the bottleneck for all polynomial AOP constructions. It dictates that any attempt to expand the dimensions of an AOP array beyond $n$ columns will inevitably force columns to duplicate, fatally violating the mutual orthogonality requirement. \\

\begin{lemma}\label{lem:poly_mod}
Let $p(x)$ be a polynomial of any degree with integer coefficients. For any integer $i$ and any integer $n > 0$, we have
$$p(i+n) \equiv p(i) \mod n$$
\end{lemma}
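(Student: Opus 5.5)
The plan is to reduce the statement to a single monomial and then expand that monomial with the binomial theorem. Write $p(x) = \sum_{k=0}^{d} a_k x^k$ with $a_k \in \mathbb{Z}$. Congruence modulo $n$ respects addition and multiplication by integers, so it is enough to show that $(i+n)^k \equiv i^k \mod n$ for every integer $k \ge 0$. Multiplying each such congruence by $a_k$ and summing over $k$ then gives $p(i+n) \equiv p(i) \mod n$.

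For the monomial step, the case $k=0$ is trivial, since both sides equal $1$. For $k \ge 1$ we expand
$$(i+n)^k = \sum_{m=0}^{k} \binom{k}{m} i^{k-m} n^{m} = i^k + n\sum_{m=1}^{k} \binom{k}{m} i^{k-m} n^{m-1}.$$
The binomial coefficients are integers, and $i$ and $n$ are integers, so the second sum is an integer. Hence $(i+n)^k - i^k$ is an integer multiple of $n$, which is exactly the required congruence.

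An alternative route avoids the binomial theorem. Note that $(i+n) - i = n$ divides $(i+n)^k - i^k$ via the factorisation $a^k - b^k = (a-b)\sum_{m=0}^{k-1} a^{k-1-m} b^{m}$. More directly, $i+n \equiv i \mod n$, and the reduction map $\mathbb{Z} \to \mathbb{Z}/n\mathbb{Z}$ is a ring homomorphism, so any integer polynomial respects the congruence. I would mention this as a one-line remark.

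No step is a real obstacle. The only points needing care are that the coefficients must be integers, so that the factor of $n$ is not cancelled by a denominator, and that $i$ may be negative. Neither causes trouble, because all the manipulations take place in $\mathbb{Z}$ and divisibility by $n$ is sign-independent. Iterating the congruence gives $p(i+tn) \equiv p(i) \mod n$ for every integer $t$, and this periodic form is what Theorem~\ref{thm:aop_bound} is presumably going to use.
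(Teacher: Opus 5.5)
Your proof is correct and follows essentially the same route as the paper: write $p$ as an integer combination of monomials, expand $(i+n)^k$ by the binomial theorem, and note that every term other than $i^k$ carries a factor of $n$. Factoring $n$ explicitly out of the tail sum, and the remark about the ring homomorphism $\mathbb{Z}\to\mathbb{Z}/n\mathbb{Z}$, make the paper's step ``all terms containing a factor of $n$ vanish'' slightly more precise, but they do not change the argument.
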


\begin{proof}
Let $p(x) = \sum_{k=0}^m c_k x^k$, where $c_k \in \mathbb{Z}$. Evaluating the polynomial at $i+n$ gives
$$p(i+n) = \sum_{k=0}^m c_k (i+n)^k$$
By the binomial theorem, $(i+n)^k = i^k + k i^{k-1}n + \binom{k}{2} i^{k-2}n^2 + \dots + n^k$. Taking this expression modulo $n$, all terms containing a factor of $n$ vanish, leaving
$$(i+n)^k \equiv i^k \mod n$$
Substituting this back into the polynomial expansion yields
$$p(i+n) \equiv \sum_{k=0}^m c_k i^k \equiv p(i) \mod n$$
\end{proof}

Lemma \ref{lem:poly_mod} establishes that the phase values generated by any integer-coefficient polynomial index function must strictly repeat every $n$ elements when evaluated modulo $n$. We now apply this structural constraint to the AOP. The following theorem demonstrates that any perfect $n$-phase sequence constructed from a bivariate polynomial index function via the AOP cannot exceed the Frank-Heimiller length bound of $n^2$.

\begin{theorem}\label{thm:aop_bound}
Let $\textbf{s}$ be a sequence of length $L = R \times C$ over $n$ roots of unity, formed by enumerating an $R \times C$ array $\textbf{S}$ row-by-row.
Suppose the entries of $\textbf{S}$ are given by an index function $S_{i,j} = \omega^{p(i,j)}$, where $\omega = e^{2\pi \sqrt{-1}/n}$ and $p(x,y) \in \mathbb{Z}[x,y]$.
If $\textbf{s}$ possesses the Array Orthogonality Property (AOP) for the divisor $C$, then $L \leq n^2$.
\end{theorem}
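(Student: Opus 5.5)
Since $\omega$ has order $n$, the entry $S_{i,j}=\omega^{p(i,j)}$ depends only on $p(i,j) \bmod n$. The plan is to bound $C\le n$ and $R\le n$ separately using Lemma~\ref{lem:poly_mod}, so that $L=RC\le n^2$.

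\emph{Step 1: bounding $C$.} Fix $i$ and regard $p(i,y)$ as a polynomial in $y$ with integer coefficients. Lemma~\ref{lem:poly_mod} gives $p(i,j+n)\equiv p(i,j) \bmod n$ for every row index $i$. Hence, if $C>n$, columns $j=0$ and $j=n$ are distinct columns (as $n\not\equiv 0 \bmod C$) and are identical as vectors, $\textbf{S}[0]=\textbf{S}[n]$. Their periodic crosscorrelation at $\tau=0$ is then
\[
\theta_{\textbf{S}[0],\textbf{S}[n]}(0)=\sum_{i=0}^{R-1}|S_{i,0}|^2=R\neq 0 .
\]
This contradicts condition 1 of the AOP, so $C\le n$.

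\emph{Step 2: bounding $R$.} By Theorem~\ref{thm:aop_perfect_seq} and Corollary~\ref{thm:aop_array_perfect}, $\textbf{S}$ is a perfect array. Applying Lemma~\ref{lem:poly_mod} in the variable $x$ gives $S_{i+n,j}=S_{i,j}$ for all $i,j$. Suppose $R>n$. The shift $v=n$ is then off-peak, since $0<n<R$. Yet
\[
\theta_{\textbf{S}}(0,n)=\sum_{i,j}S_{i,j}S^*_{i+n,j}.
\]
Here $i+n$ is read mod $R$, and the identity $S_{i+n,j}=S_{i,j}$ must hold for that reduced index. When $n\mid R$ the shift is consistent with the period, every term equals $1$, and $\theta_{\textbf{S}}(0,n)=RC\ne 0$, a contradiction. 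The same argument could also be phrased through Theorem~\ref{thm:exist_cond_row_col}: the column-sum sequence $\textbf{S}_{\text{cols}}$ would be perfect yet $n$-periodic.

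\emph{Main obstacle.} The difficulty is wraparound when $n\nmid R$ (and, for Step~1, the analogous case $n\nmid C$). The array is indexed cyclically mod $R$, while the polynomial is evaluated at the reduced index $i \bmod R$. So the relation $S_{i+n \bmod R,\,j}=S_{i,j}$ may fail for the $n$ rows that wrap around. For Step~1 there is no problem, because only the identity of two columns is needed and no cyclic shift is involved.

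For Step~2 I would avoid shifts altogether. Rows $0$ and $n$ are equal as vectors. Condition~2 of the AOP makes the column vectors $\textbf{S}[j]$ a periodic complementary set of length $R$ with energy $C$ per row. Equivalently, the $R\times R$ Gram matrix $G_{i,i'}=\sum_j S_{i,j}S^*_{i',j}$ is circulant with first row $(RC,0,\dots,0)$ in the Fourier sense. I would combine this with condition~1 to show that the rows of $\textbf{S}$ (length $C$) are pairwise orthogonal. This follows from perfection of the array along $h$: the rows form a complementary orthogonal family, forcing $R\le C$. I expect this to be the hard step, because $R\le C$ requires the full two-dimensional perfection, not just periodicity. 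If it succeeds, then $R\le C\le n$ immediately gives $L\le n^2$. If it fails, I would fall back on restricting to the case $n\mid R$, handled by the shift argument above.
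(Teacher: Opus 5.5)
Your Step~1 is exactly the paper's argument, and your shift argument in Step~2 is also the paper's. The paper passes to the column-sum sequence $\textbf{r}$ of Theorem~\ref{thm:exist_cond_row_col} and derives $r_{i+n}=r_i$ from Lemma~\ref{lem:poly_mod}. It substitutes this into the cyclic sum $\theta_{\textbf{r}}(n)=\sum_{i=0}^{R-1} r_i r_{i+n}^*$ to obtain $\sum_i |r_i|^2$, which contradicts $\theta_{\textbf{r}}(0)=RC$. Your worry about wraparound is legitimate. The identity $r_{i+n}=r_i$ comes from the polynomial evaluated at the integer $i+n$, whereas the cyclic sum needs row $(i+n)\bmod R$. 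The paper makes this substitution for the $n$ wrapped indices without comment. So the paper's proof, like yours, is airtight when $n\mid R$ (more generally, when $i\mapsto\omega^{p(i,j)}$ is also $R$-periodic), and on that case you and the paper coincide.

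The genuine gap is your proposed repair for $n\nmid R$, which rests on a false lemma. The AOP does not make the rows of $\textbf{S}$ pairwise orthogonal and does not force $R\le C$, even under the polynomial hypothesis. Take $n=4$, $p(x,y)=x$, $R=2$, $C=1$. The sequence $(1,\sqrt{-1})$ is perfect, hence has the AOP for divisor $1$ (condition~1 is vacuous), yet its rows $(1)$ and $(\sqrt{-1})$ are not orthogonal and $R>C$. Similarly, $p(x,y)=x(x-1)(x-2)$ with $n=4$ gives the perfect sequence $(1,1,1,-1)$, where $R=4>C=1$. There rows $0,1,2$ even coincide, so ``rows $0$ and $n$ are equal'' carries no force by itself. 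Condition~2 says only that the cyclic diagonal sums $\sum_i G_{i,i+v}$ of the row Gram matrix vanish for $v\not\equiv 0 \bmod R$. The matrix $G$ has diagonal entries $C$ and need not be circulant, let alone diagonal.

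Hence your proposal establishes $L\le n^2$ only when $n\mid R$. If you count the wrapped terms honestly, the periodicity idea yields $\mathrm{Re}\,\theta_{\textbf{S}}(0,n)\ge (R-n)C-nC$. This gives $R\le 2n$ and therefore only $L\le 2n^2$. Reaching $n^2$ when $n\nmid R$ requires an additional idea that neither your proposal nor the paper's proof supplies.
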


\begin{proof}
The proof proceeds by establishing upper bounds on the number of columns $C$ and the number of rows $R$ of the array $\textbf{S}$.\\

First, we show that $C \leq n$. By the definition of the index function, the array entries are $S_{i,j} = \omega^{p(i,j)}$. Because $p(x,y)$ is a polynomial with integer coefficients, by Lemma \ref{lem:poly_mod} evaluating it at $j + n$ yields
$$p(i, j+n) \equiv p(i, j) \mod n.$$

Since $\omega$ is a primitive $n^{\text{th}}$ root of unity, the value of $\omega^k$ depends strictly on $k \mod n$. Therefore
$$S_{i, j+n} = \omega^{p(i, j+n)} = \omega^{p(i,j)} = S_{i,j}.$$

This implies that the sequence of column vectors is periodic with period $n$; specifically, the column $\textbf{S}[n]$ is identical to the column $\textbf{S}$. Condition 1 of the AOP mandates that any two distinct columns of $\textbf{S}$ must be orthogonal. If we assume $C > n$, then both $\textbf{S}$ and $\textbf{S}[n]$ exist in the array as distinct columns, requiring their cross-correlation $\theta_{\textbf{S}, \textbf{S}[n]}(\tau) = 0$ for all $\tau$. In particular, for $\tau = 0$, we require their inner product to be exactly zero. However, since $\textbf{S} = \textbf{S}[n]$, this inner product is simply the squared norm of $\textbf{S}$, which is strictly positive. A non-zero vector cannot be orthogonal to itself. Thus, by contradiction, we must have $C \leq n$. \\

Next, we show that $R \leq n$. Because the sequence $\textbf{s}$ possesses the AOP, by Corollary \ref{thm:aop_array_perfect} the associated $R \times C$ array $\textbf{S}$ is a perfect array. By Theorem \ref{thm:exist_cond_row_col}, the sequence formed by the sum of its columns, $\textbf{r} = [r_0, r_1, \dots, r_{R-1}]$, forms a perfect sequence of length $R$ where
$$r_i = \sum_{j=0}^{C-1} S_{i,j} = \sum_{j=0}^{C-1} \omega^{p(i,j)}.$$
Applying the same algebraic periodicity to the row index, by Lemma \ref{lem:poly_mod} we have $p(i+n, j) \equiv p(i,j) \mod n$. Consequently, $S_{i+n, j} = S_{i,j}$ for all $i,j$. It directly follows that $r_{i+n} = r_i$. \\

Assume, for the sake of contradiction, that $R > n$. Because $\textbf{r}$ is a perfect sequence of length $R$, its periodic autocorrelation is zero for all off-peak shifts $\tau \not\equiv 0 \mod R$. Since $R > n$, the shift $\tau = n$ is strictly off-peak, meaning $\theta_{\textbf{r}}(n) = 0$. However, computing this autocorrelation directly using the established periodicity $r_{i+n} = r_i$ yields
$$\theta_{\textbf{r}}(n) = \sum_{i=0}^{R-1} r_i r_{i+n}^* = \sum_{i=0}^{R-1} r_i r_i^* = \sum_{i=0}^{R-1} \|r_i\|^2.$$

For this sum of squared magnitudes to equal zero, we must have $r_i = 0$ for all $i$. If every element of $\textbf{r}$ is zero, its peak autocorrelation $\theta_{\textbf{r}}(0)$ must inherently also be zero. However, the proof of Theorem \ref{thm:exist_cond_row_col} establishes that the peak autocorrelation of the column-sum sequence is equal to the peak autocorrelation of the array itself, $\theta_{\textbf{r}}(0) = \theta_{\textbf{S}}(0,0)$. Since the elements of $\textbf{S}$ are complex roots of unity, their squared magnitudes are exactly $1$, meaning $\theta_{\textbf{S}}(0,0) = RC = L > 0$. This gives us the required contradiction: $\theta_{\textbf{r}}(0)$ cannot be simultaneously zero and strictly positive. Therefore, the assumption $R > n$ must be false, establishing that $R \leq n$. \\

Having established that $C \leq n$ and $R \leq n$, the length of the sequence $\textbf{s}$ is bounded by
$$L = R \times C \leq n^2.$$
\end{proof}

It is important to address a topological shift introduced by the column-sum projection in the proof of Theorem \ref{thm:aop_bound}. While the parent array $\textbf{S}$ is constructed over an $n$-phase alphabet---where every element is a complex root of unity with a constant magnitude of $1$---the projected sequence $\textbf{r}$ is formed by linearly summing these elements. The resulting elements $r_i$ no longer reside in the finite cyclic group of roots of unity; instead, they are cyclotomic integers that belong to the subring $\mathbb{Z}[\omega_n]$ within the continuous complex plane $\mathbb{C}$. Consequently, the elements of the projected sequence $\textbf{r}$ no longer possess a constant, unimodular magnitude.\\

In traditional digital signal processing, perfect sequences are often assumed to have a constant envelope due to physical hardware constraints. However, in the rigorous context of combinatorial sequence design, perfection is defined strictly by the topology of the periodic autocorrelation function. The contradiction established in Theorem \ref{thm:aop_bound} is entirely valid over $\mathbb{C}$ because it relies on the positive definiteness of the $L_2$ norm and the conservation of energy. Because the peak autocorrelation $\theta_{\textbf{r}}(0)$ exactly equals the total baseline energy of the array $\textbf{S}$, forcing the off-peak correlation to evaluate the sequence's own $L_2$ norm creates a contradiction and circumvents any need for the projected sequence to remain unimodular.

%
%

\section{AOP Bounds for Floored Rational Index Functions}

While simple polynomial index functions are locked to the $n^2$ limit by modular periodicity, modern sequence designers frequently utilise piecewise or step-like phase functions through the use of the floor function \cite{liu2004}\cite{blake2014}\cite{blake2016}. We now generalise our previously derived upper bound to piecewise index functions of this expanded geometric form.

\begin{theorem}\label{thm:aop_floor_bound}
Let $\textbf{s}$ be a sequence of length $L = R \times C$ over $K$ roots of unity, formed by enumerating an $R \times C$ array $\textbf{S}$ row-by-row. Suppose the entries of $\textbf{S}$ are given by an index function $S_{i,j} = \omega^{\lfloor p(i,j)/n \rfloor}$, where $\omega = e^{2\pi \sqrt{-1}/K}$, $p(x,y) \in \mathbb{Z}[x,y]$, and $n \in \mathbb{Z}^+$. If $\textbf{s}$ possesses the Array Orthogonality Property (AOP) for the divisor $C$, then $L \leq n^2 K^2$.
\end{theorem}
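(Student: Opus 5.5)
The plan is to reduce Theorem \ref{thm:aop_floor_bound} to the periodicity argument used for Theorem \ref{thm:aop_bound}, with period $nK$ replacing $n$. The key observation is that the exponent $\lfloor p(i,j)/n \rfloor$ only matters modulo $K$, and its residue mod $K$ is determined by $p(i,j) \bmod nK$.

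\emph{Step 1 (exponent residue).} Write $p(i,j) = nq + r$ with $0 \le r < n$, so $\lfloor p(i,j)/n\rfloor = q$. Reducing $p(i,j)$ modulo $nK$ fixes both $r$ and $q \bmod K$. In other words, if $p(a,b) \equiv p(c,d) \pmod{nK}$, then
\[
\lfloor p(a,b)/n\rfloor \equiv \lfloor p(c,d)/n\rfloor \pmod K .
\]
To check this, suppose $p(a,b) = p(c,d) + tnK$. Then $\lfloor p(a,b)/n \rfloor = \lfloor p(c,d)/n\rfloor + tK$, since adding the integer $tK$ commutes with the floor.

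\emph{Step 2 (column bound).} Apply Lemma \ref{lem:poly_mod} with modulus $nK$ in the variable $j$. This gives $p(i,j+nK) \equiv p(i,j) \pmod{nK}$, and Step 1 then gives $S_{i,j+nK} = S_{i,j}$ for all $i$. If $C > nK$, columns $0$ and $nK$ are distinct columns of $\textbf{S}$ but equal as vectors. Their inner product at $\tau=0$ is then $R > 0$, contradicting condition 1 of the AOP. Hence $C \le nK$.

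\emph{Step 3 (row bound).} Corollary \ref{thm:aop_array_perfect} shows $\textbf{S}$ is perfect, so by Theorem \ref{thm:exist_cond_row_col} the column-sum sequence $\textbf{r}$ is perfect with $\theta_{\textbf{r}}(0)=\theta_{\textbf{S}}(0,0)=RC$. Lemma \ref{lem:poly_mod} in $i$ with modulus $nK$, together with Step 1, gives $S_{i+nK,j}=S_{i,j}$ and hence $r_{i+nK}=r_i$. If $R>nK$, the shift $\tau=nK$ is off-peak, so
\[
0=\theta_{\textbf{r}}(nK)=\sum_i |r_i|^2 .
\]
This forces $\textbf{r}\equiv 0$, which contradicts $\theta_{\textbf{r}}(0)=RC>0$. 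So $R\le nK$, and therefore $L=RC\le n^2K^2$.

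The main obstacle is Step 1. The floor function is not compatible with reduction modulo $n$ alone, so the fractional part $r$ can be scattered by a shift of size $n$ or $K$. The fix is to use the combined modulus $nK$, which controls both the remainder and the quotient mod $K$. Everything after that is a verbatim repetition of the proof of Theorem \ref{thm:aop_bound}.
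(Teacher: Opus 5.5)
Your Steps 1 and 2 are correct and match the paper's argument: the paper also writes $p(i,j+nK)=p(i,j)+MnK$ and pulls $MK$ out of the floor. Your Step 3 is the paper's row argument essentially verbatim. Step 3, however, has a genuine gap, and the paper's proof (like its proof of Theorem~\ref{thm:aop_bound}) has the same one.

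The periodicity you get from Lemma~\ref{lem:poly_mod} belongs to the formula $F(i,j)=\omega^{\lfloor p(i,j)/n\rfloor}$ on $\mathbb{Z}^2$. The periodic autocorrelation $\theta_{\textbf{r}}(nK)=\sum_{i=0}^{R-1} r_i\, r^*_{(i+nK)\bmod R}$ instead wraps the array indices modulo $R$.
\begin{itemize}
\item For $0\le i<R-nK$ you do get $r_{i+nK}=r_i$.
\item For the $nK$ indices $R-nK\le i<R$, the partner is $r_{i+nK-R}$. Equating it with $r_i$ needs $F(i-R,j)=F(i,j)$, i.e.\ that the row period divide $R$. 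This holds when $nK\mid R$ but not in general.
\end{itemize}
For example, take $K=3$, $n=1$, $p=i$, $C=1$, $R=5$. Then $\textbf{r}=(1,\omega,\omega^2,1,\omega)$ and $\theta_{\textbf{r}}(3)=2+3\omega^2\neq 5=\sum_i|r_i|^2$. So the identity $\theta_{\textbf{r}}(nK)=\sum_i|r_i|^2$, which carries the whole contradiction, is only justified when $nK\mid R$. Step 2 is unaffected: it compares two in-range columns at $\tau=0$, with no wrap-around.

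Here is what the periodicity argument actually gives. Work directly with the perfect array and let $\rho_i$ denote its rows, so that $\theta_{\textbf{S}}(nK,0)=\sum_{i}\langle \rho_i,\rho_{(i+nK)\bmod R}\rangle$.
\begin{itemize}
\item The $R-nK$ non-wrapping terms each equal $C$.
\item The $nK$ wrapping terms each have modulus at most $C$, by Cauchy--Schwarz.
\end{itemize}
Hence $|\theta_{\textbf{S}}(nK,0)|\ge (R-2nK)C$, which rules out $R>2nK$. The case $R=2nK$ is also excluded, since then $nK\mid R$ and your original computation is valid. This yields $R<2nK$ and so $L<2n^2K^2$, but it leaves the window $nK<R<2nK$ open. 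To reach the stated bound $L\le n^2K^2$ you need either a genuinely new argument for that window or an extra hypothesis such as $nK\mid R$.
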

\begin{proof}
The proof proceeds by establishing upper bounds on the number of columns $C$ and the number of rows $R$ of the array $\textbf{S}$ by finding the expanded algebraic periodicity of the array.\\

First, we show that $C \leq nK$. By the definition of the index function, the array entries are $S_{i,j} = \omega^{\lfloor p(i,j)/n \rfloor}$. Because $p(x,y)$ is a polynomial with integer coefficients, expanding $p(i, j + nK)$ yields
$$p(i, j+nK) \equiv p(i, j) \mod{nK}.$$
Therefore, there exists some integer $M$ such that $p(i, j+nK) = p(i,j) + M(nK)$. Substituting this into the floor function gives
$$\left\lfloor \frac{p(i, j+nK)}{n} \right\rfloor = \left\lfloor \frac{p(i, j) + M(nK)}{n} \right\rfloor = \left\lfloor \frac{p(i,j)}{n} \right\rfloor + MK.$$
Since $\omega$ is a primitive $K^{\text{th}}$ root of unity, $\omega^{MK} = (\omega^K)^M = 1$. Therefore, the array entries satisfy
$$S_{i, j+nK} = \omega^{\lfloor p(i,j)/n \rfloor + MK} = \omega^{\lfloor p(i,j)/n \rfloor} = S_{i,j}.$$
This implies that the sequence of column vectors is periodic with an expanded period $nK$; specifically, the column $\textbf{S}[nK]$ is identical to the column $\textbf{S}$. Condition 1 of the AOP mandates that any two distinct columns of $\textbf{S}$ must be orthogonal. If we assume $C > nK$, then both $\textbf{S}$ and $\textbf{S}[nK]$ exist in the array as distinct columns, requiring their cross-correlation $\theta_{\textbf{S}, \textbf{S}[nK]}(\tau) = 0$ for all $\tau$. In particular, for $\tau = 0$, their inner product is exactly the squared norm of $\textbf{S}$, which is strictly positive. By contradiction, we must have $C \leq nK$.\\

Next, we show that $R \leq nK$. Because the sequence $\textbf{s}$ possesses the AOP, by Corollary \ref{thm:aop_array_perfect}  the associated $R \times C$ array $\textbf{S}$ is a perfect array. By Theorem \ref{thm:exist_cond_row_col}, the sequence formed by the sum of its columns, denoted $\textbf{r} = [r_0, r_1, \dots, r_{R-1}]$, is a perfect sequence of length $R$ where
$$r_i = \sum_{j=0}^{C-1} S_{i,j} = \sum_{j=0}^{C-1} \omega^{\lfloor p(i,j)/n \rfloor}.$$
Applying the same algebraic periodicity to the row index, we have $p(i+nK, j) = p(i,j) + M'(nK)$ for some integer $M'$. It directly follows that $S_{i+nK, j} = S_{i,j}$ for all $i,j$, and consequently, $r_{i+nK} = r_i$. \\

Assume, for the sake of contradiction, that $R > nK$. Because $\textbf{r}$ is a perfect sequence of length $R$, its periodic autocorrelation is zero for all off-peak shifts $\tau \not\equiv 0 \mod R$. Since $R > nK$, the shift $\tau = nK$ is strictly off-peak, meaning $\theta_{\textbf{r}}(nK) = 0$. However, computing this autocorrelation directly using the periodicity $r_{i+nK} = r_i$ yields
$$\theta_{\textbf{r}}(nK) = \sum_{i=0}^{R-1} r_i r_{i+nK}^* = \sum_{i=0}^{R-1} r_i r_i^* = \sum_{i=0}^{R-1} \|r_i\|^2.$$
For this sum of squared magnitudes to equal zero, every element of $\textbf{r}$ must be zero, which would mean its peak autocorrelation $\theta_{\textbf{r}}(0)$ is also zero. However, the proof of Theorem \ref{thm:exist_cond_row_col} establishes that the peak autocorrelation of the column-sum sequence equals the peak autocorrelation of the array itself, $\theta_{\textbf{r}}(0) = \theta_{\textbf{S}}(0,0) = RC = L > 0$. This poses a direct contradiction. Therefore, the assumption $R > nK$ must be false, establishing that $R \leq nK$. \\

Having established that $C \leq nK$ and $R \leq nK$, then the length of the sequence $\textbf{s}$ is bounded by
$$L = R \times C \leq n^2 K^2.$$
\end{proof}

%
%

\section{The Role of Period Collapse in Floored Index Functions}

Theorem \ref{thm:aop_floor_bound} reveals an interesting theoretical loophole: introducing a divisor $n$ inside the floor function algebraically expands the geometric bounding box of the array's periodicity from $K \times K$ to $nK \times nK$. This expanded bound suggests that it may be possible to construct a $K$-phase sequence of length $n^2 K^2$, directly challenging Mow's conjecture. However, while the expanded array size does not violate the cyclic wrapping conditions of the AOP, actually populating this space with orthogonal vectors requires neutralising highly volatile non-linear phase distortions.\\

To formalize this landscape, consider the structure of a general bi-quadratic index function inside the floor operation, $S_{i,j} = \omega^{\lfloor p(i,j)/n \rfloor}$, where $p(i,j) = A(j)i^2 + B(j)i + C(j)$ and $\omega = e^{2\pi \sqrt{-1}/K}$. We can evaluate the cross-correlation between any two distinct columns $j_1 \neq j_2$ by applying the mathematical identity $\lfloor x \rfloor = x - \{x\}$, where $\{x\}$ represents the discrete fractional part. This identity allows us to algebraically separate the cross-correlation summation into two distinct modulations
\begin{equation}\label{eq:biquad_sum}
\sum_{i=0}^{R-1} \exp\left( \frac{2\pi \sqrt{-1}}{nK} (\Delta A i^2 + \Delta B i + \Delta C) \right) \times \exp\left( -\frac{2\pi \sqrt{-1}}{K} \left( \left\{ \frac{p(i,j_1)}{n} \right\} - \left\{ \frac{p(i,j_2)}{n} \right\} \right) \right) = 0
\end{equation}
where $\Delta A = A(j_1) - A(j_2)$, and similarly for $\Delta B$ and $\Delta C$. 

The first exponential term in Equation \ref{eq:biquad_sum} generates a continuous Gaussian phase evolution over the artificially expanded period $nK$. In isolation, Gaussian sums are highly structured and are well-known to yield orthogonal vectors. However, the second exponential isolates the non-linear fractional components. \\

To understand why this fractional part is so volatile, we must consider the instantaneous frequency of the argument $p(i,j)/n$. Because $p(i,j)$ contains a quadratic term $A(j)i^2$, the ``frequency'' of the fractional wrapping grows linearly with the row index $i$. Numerically, this behaves like a discrete chirp signal sampled beyond its Nyquist rate. For small values of $i$, the fractional part $\{p(i,j)/n\}$ forms a slow, predictable staircase. But as $i$ increases, the distance between discrete phase jumps shrinks rapidly. Analogous to Weyl's criterion for equidistribution, the quadratic fractional sequence pseudo-randomly scatters across the available discrete values $0, 1/n, 2/n, \dots, (n-1)/n$. \\

When this high-frequency, pseudo-random noise term is multiplied against the underlying continuous Gaussian sum, it breaks the geometric symmetry required for the sum to perfectly cancel to zero. Instead of the phase vectors gracefully closing into a perfect polygon (which mathematically guarantees orthogonality), the discrete fractional jumps cause the cross-correlation to undergo a random walk in the complex plane. \\

For the cross-correlation to remain exactly zero, these highly irregular fractional jumps must not destructively interfere with the orthogonality of the underlying Gaussian sum. To systematically prevent this fractional phase distortion from destroying the column orthogonality across all $C$ columns, the quadratic contribution to the fractional part must algebraically cancel for all row indices $i$. \\

Cancelling this quadratic scattering strictly forces the constraint that the leading coefficient must be a multiple of the divisor, i.e., $A(j) \equiv 0 \pmod n$ for all possible column indices $j$. If we let $A(j) = n \cdot A'(j)$ for some integer polynomial $A'(j)$, we can substitute this constrained coefficient back into the continuous Gaussian phase term

$$\exp\left( \frac{2\pi \sqrt{-1}}{nK} (\Delta A i^2) \right) = \exp\left( \frac{2\pi \sqrt{-1}}{nK} (n \Delta A' i^2) \right) = \exp\left( \frac{2\pi \sqrt{-1}}{K} \Delta A' i^2 \right).$$

Notice that the divisor $n$ completely cancels out of the quadratic phase equation. The period of the Gaussian sum instantly collapses from its artificially expanded state $nK$ back to the original base phase $K$. Because the number of mutually orthogonal quadratic sequences of algebraic period $K$ cannot exceed $K$, the number of columns $C$ is fundamentally restricted to $K$. \\

Thus, we arrive at an inescapable structural paradox where the very coefficient constraints required to silence the destructive fractional scattering and maintain orthogonality inherently force the geometric bounds of the array to revert to $K \times K$. Thus, the expanded orthogonal capacity of $n^2K^2$ almost certainly collapses to the Frank-Heimiller limit of $K^2$.

%
%

\section{The Commutative Boundary and Non-Commutative Exceptions}

The bounds established in Theorems \ref{thm:aop_bound} and \ref{thm:aop_floor_bound} limits sequence discovery under the AOP. However, the validity of these bounds relies on the mathematical assumption of commutativity within the underlying phase space. It is important to note that the periodicity bottleneck and the subsequent collapse of floored expansions rely intrinsically on the abelian group structure of complex roots of unity. Therefore, these bounds do not govern sequences defined over non-commutative algebras. \\

In the non-commutative group of simple unit quaternions $Q_8 = \{\pm 1, \pm \mathbf{i}, \pm \mathbf{j}, \pm \mathbf{k}\}$, the algebraic assumption that fractional phase components can be cleanly separated and collapsed modulo $n$ (as utilised in our analysis of fractional phase scattering) fails. The chaotic fractional scattering that destroys cross-correlation orthogonality in the complex plane can be mathematically absorbed or perfectly offset by the anti-commutative properties of the elements themselves (where $\mathbf{i}\mathbf{j} = -\mathbf{j}\mathbf{i}$). This non-abelian cancellation allows orthogonal vectors to exist well beyond the $K^2$ dimension limit. \\

Indeed, while early investigations by the author hypothesised a quaternionic extension of Mow's conjecture \cite{blake2016}, recent breakthroughs by Bright et al. \cite{bright2019} have disproven this conjecture and constructed interesting new sequences over the unit quaternions. Using combinatorial design theory and Williamson matrices, Bright et al. proved that perfect and odd-perfect quaternion sequences exist in unbounded lengths $2^t$ for $t \ge 0$, actively bypassing the $K^2$ boundary constraints that dominate commutative phase spaces.

%
%

\section{Conclusion}

A universal, unconditional proof of Mow's conjecture---that no perfect $n$-phase sequence of length strictly greater than $n^2$ can exist---remains an active pursuit. However, the theoretical bounds established in this paper significantly tighten the structural constraints for sequences constructed via the AOP. \\

We have demonstrated that traditional avenues for constructing long perfect sequences utilising integer polynomials are inherently locked to their algebraic period, confirming the $n^2$ bound. Furthermore, we formalised the structural limitations of floored rational index functions, proving that any attempt to geometrically stretch the array dimensions triggers chaotic fractional phase scattering. The specific coefficient constraints required to silence this destructive scattering inevitably force the continuous phase space to collapse back to the base field, re-establishing the Frank-Heimiller limit. \\ 

Critically, we emphasise that these rigid structural bottlenecks rely entirely on the abelian nature of complex roots of unity. When these commutative constraints are removed, as seen in recent quaternionic sequence discoveries, the bounds can be vastly exceeded. Future work seeking to universally resolve Mow's conjecture in the complex plane must bridge the gap between the algebraic geometry bounds proven here and the complementary character theory limits of difference sets to definitively rule out the existence of completely aperiodic, non-algebraic sequence designs.

\end{document}